\documentclass{llncs}

\usepackage[utf8]{inputenc}
\usepackage{graphicx} 
\usepackage{tikz}
\usepackage{color}
\usepackage{colortbl}
\usepackage{amsmath}
\usepackage{amssymb}
\usepackage{fixltx2e}
\usepackage{algorithm}
\usepackage{algorithmic}
\usepackage{xspace}

\definecolor{purple}{rgb}{0.7, 0, 1}
\definecolor{lightgray}{gray}{0.7}
\newcolumntype{G}{>{\columncolor{lightgray}}}

\title{Genome Halving by Block Interchange}
\author{Antoine Thomas \and A{\"i}da Ouangraoua \and Jean-St{\'e}phane Varr{\'e}}
\institute{LIFL, UMR 8022 CNRS, Universit\'e Lille 1
  \\ INRIA Lille, Villeneuve d'Ascq, France}

\newcommand{\breakpoint}{ \textsubscript{$\textcolor{black}{\blacktriangle}$} }

\newcommand{\fst}[1]{ \ensuremath{#1} }
\newcommand{\snd}[1]{ \ensuremath{\overline{#1}} }


\newcommand\aff[2]{\ensuremath{(\fst{#1}~~\fst{#2})}}
\newcommand\asf[2]{\ensuremath{(\snd{#1}~~\fst{#2})}}
\newcommand\afs[2]{\ensuremath{(\fst{#1}~~\snd{#2})}}
\newcommand\ass[2]{\ensuremath{(\snd{#1}~~\snd{#2})}}
\newcommand\paff[2]{\ensuremath{\fst{#1}~~\fst{#2}}}
\newcommand\pasf[2]{\ensuremath{\snd{#1}~~\fst{#2}}}
\newcommand\pafs[2]{\ensuremath{\fst{#1}~~\snd{#2}}}
\newcommand\pass[2]{\ensuremath{\snd{#1}~~\snd{#2}}}
\newcommand\oiff[2]{\ensuremath{]\fst{#1}~;~\fst{#2}[}}

\newcommand\oifs[2]{\ensuremath{]\fst{#1}~;~\snd{#2}[}}
\newcommand\oiss[2]{\ensuremath{]\snd{#1}~;~\snd{#2}[}}


\newcommand\cifs[2]{\ensuremath{[\fst{#1}~;~\snd{#2}]}}


\renewcommand{\NG}{\ensuremath{\mbox{\texttt{NG}}}}
\def\bi{\ensuremath{\mbox{BI}}}
\def\BI{\ensuremath{\mbox{BI}}}

\def\DCJ{\ensuremath{\mbox{\sl DCJ}}}


\def\etal{\textsl{et al.}\xspace}

\newlength{\arclength}
\newsavebox{\arcbox}

\begin{document}

\maketitle

\begin{abstract}
    We address the problem of finding the minimal number of block
    interchanges (exchange of two intervals) required to transform a
    duplicated linear genome into a tandem duplicated linear genome. We
    provide a formula for the distance as well as a polynomial time
    algorithm for the sorting problem.
\end{abstract}

\section{Introduction}
\label{sec:intro}

Genomic rearrangements are known to play a central role in the
evolutionary history of the species. Several operations act on the
genome, shaping the sequence of genes. A number of models to sort 
a genome into another have been studied: reversals,
transpositions and more recently Double-Cut-and-Join (DCJ). 
Another operation, called
\emph{block interchange}, consists in exchanging two intervals of a
genome. 

Block interchanges scenarios have been studied for the first time by
Christie \cite{Christie96}. He proposed a polynomial-time algorithm
for computing the distance between two linear chromosomes with unique
gene content. Lin \etal \cite{Lin05} proposed later a better algorithm.
Yancopoulos \etal \cite{Yancopoulos05} introduced the DCJ operation 
which consist in
cutting the genomes in two points and joining the four resulting
extremities in a different way. Interestingly, they noticed that a
block interchange can be simulated by two consecutive DCJs: an
excision followed by a reintegration. 

Another very important feature in genome evolution is that genomes
often undergo duplication events: both segmental and whole-genome
duplications. Genome duplication events are followed by other
rearrangements events which result in a scrambled genome. Genome
halving consists in finding the sequence of events that allow to go
back from the scrambled genome to the original duplicated one.

Genome halving has been studied under several models: reversals
\cite{Mabrouk98}, translocation/reversals \cite{Mabrouk03}, DCJ
\cite{Warren08}, breakpoints \cite{Tannier08}. Most of the results led
to polynomial time algorithms. Particularly, under the DCJ model,
Mixtacki \cite{Mixtacki08} gave some useful results and data structures. 
In this paper, we derive our results from those
results. Very recently, Kov{\'a}{\v c} \etal \cite{Kovac10} addressed 
the problem of
reincorporating the temporary circular chromosomes induced by DCJs
immediately after their creation considering genome halving. 
Although this problem is obviously related to the
problem we address, the aim and results are not the same. We are interested 
in linear genomes, not in multilinear ones,
and we focus on pure block interchange scenarios whereas Kov{\'a}{\v c}
\etal focused on scenarios made of reversals, translocation, fusion,
fissions along with block interchanges.

Section \ref{sec:pre} gives definitions. In Section \ref{sec:lb}, we
first give a lower bound on the distance with helpful properties for
the rest of the paper. In Section \ref{sec:dist}, we prove the
analytical formula for the distance. We conclude in Section
\ref{sec:scenario} with a quadratic time and space algorithm to
obtain a parsimonious scenario.

\section{Preliminaries: duplicated genomes, rearrangement, genome halving problems}
\label{sec:pre}

In this section we give the main definitions and notations used in the paper.

\subsection*{Duplicated Genomes}

A genome is composed of genomic markers organized in linear or circular 
chromosomes. A linear chromosome is represented by an ordered sequence of 
unsigned integers, each standing for a marker, surrounded by two abstract 
markers $\circ$  at each end indicating the telomeres. A circular chromosome 
is represented by a circularly ordered sequence of unsigned integers 
representing markers. For example,  
$(\fst{1}~~\fst{2}~~\fst{3}) ~
(\circ~~\fst{4}~~\fst{5}~~\fst{6}~~\fst{7}~~\circ)$ is a genome
constituted of one circular and one linear chromosome. 

\begin{definition}
A \emph{rearranged duplicated genome} is a genome in which each marker appears 
twice. 
\end{definition}

In a rearranged duplicated genome, two copies of a same marker are called paralogs. We distinguish paralogs by denoting one marker by $\fst{x}$ and its paralog by $\snd{x}$. By convention $\snd{\snd{x}}=x$.
For example, the following genome is a rearranged duplicated genome: 
$(\circ~~\fst{1}~~\snd{1}~~\fst{3}~~\fst{2}~~\fst{4}~~\fst{5}~~\fst{6}~~\snd{6}~~\fst{7}~~\snd{3}~~\fst{8}~~\snd{2}~~\snd{4}~~\snd{5}~~\fst{9}~~\snd{8}~~\snd{7}~~\snd{9}~~\circ )$.

An \emph{adjacency} in a genome is a pair of consecutive markers. 
For example, the genome $(\circ~~\fst{1}~~\fst{2}~~\circ)~ (\fst{3}~~\fst{4}~~\fst{5})$ has six adjacencies, $(\circ ~~\fst{1}),~(\fst{1}~~\fst{2}) ,~ (\fst{2}~~\circ)$, and $(\fst{3}~~\fst{4}) ,~ (\fst{4}~~\fst{5}) ,~ (\fst{5}~~\fst{3})$.
The linear 
or circular order of the markers in a chromosome naturally induces an order 
on the adjacencies that we denote by $<$. For example in the previous genome 
the order induced on the adjacencies is:
 $(\circ ~~\fst{1}) < (\fst{1}~~\fst{2}) < (\fst{2}~~\circ)$, and $(\fst{3}~~\fst{4}) < (\fst{4}~~\fst{5}) < (\fst{5}~~\fst{3}) < (\fst{3}~~\fst{4})$.

A \emph{double-adjacency} in a genome $G$ is an adjacency $\aff{a}{b}$
such that  $\ass{a}{b}$ is an adjacency of $G$ as well. Note that a genome always has an even number of double-adjacencies.
For example, the four double-adjacencies in the following genome are indicated by dots :
$$G = (\circ~~\fst{1}~~\snd{1}~~\fst{3}~~\fst{2}~\cdot~\fst{4}~\cdot~\fst{5}~~\fst{6}~~\snd{6}~~\fst{7}~~\snd{3}~~\fst{8}~~\snd{2}~\cdot~\snd{4}~\cdot~\snd{5}~~\fst{9}~~\snd{8}~~\snd{7}~~\snd{9}~~\circ )$$

A consecutive sequence of double-adjacencies can be rewritten as a single marker; this process is called \emph{reduction}. For example, genome $G$ can be reduced by rewritting $\fst{2}~.~\fst{4}~.~\fst{5}$ and $\snd{2}~.~\snd{4}~.~\snd{5}$ as $\fst{10}$ and $\snd{10}$, yielding the following genome:
$$G^r = (\circ ~~\fst{1}~~\snd{1}~~\fst{3}~~\fst{10}~~\fst{6}~~\snd{6}~~\fst{7}~~\snd{3}~~\fst{8}~~\snd{10}~~\fst{9}~~\snd{8}~~\snd{7}~~\snd{9}~~\circ )$$

\begin{definition}
A \emph{tandem-duplicated genome} is a rearranged duplicated genome which can be reduced to a genome of the form $(\circ~~\fst{x}~~\snd{x}~~\circ )$.
\end{definition}

In other words, a tandem-duplicated genome is composed of a single linear chromosome where all adjacencies, except the two containing the marker $\circ$ and the central adjacency, are double-adjacencies.
For example, the genome $(\circ~~\fst{1}~\cdot~\fst{2}~\cdot~\fst{3}~\cdot~\fst{4}~~\snd{1}~\cdot~\snd{2}~\cdot~\snd{3}~\cdot~\snd{4}~~\circ )$ is a tandem-duplicated genome that can be reduced to 
$(\circ~~\fst{5}~~\snd{5}~~\circ )$
by rewritting $\fst{1}~\cdot~\fst{2}~\cdot~\fst{3}~\cdot~\fst{4}$ and $\snd{1}~\cdot~\snd{2}~\cdot~\snd{3}~\cdot~\snd{4}$ as  $\fst{5}$ and $\snd{5}$.


\begin{definition}
A \emph{perfectly duplicated genome} is a rearranged duplicated genome such that each adjacency is a double-adjacency. 
\end{definition}

For example, the genome $(1~~{2}~~{3}~~4~~\snd{1}~~\snd{2}~~\snd{3}~~\snd{4})$ is a perfectly duplicated genome.

\subsection*{Rearrangements}
\label{sec:rearrangement}

A rearrangement operation on a given genome cuts a set of adjacencies of the 
genome called \emph{breakpoints} and forms new adjacencies with the exposed 
extremities, while altering no other adjacency. In the sequel, the adjacencies 
cut by a rearrangement operation are indicated in the genome by the symbol $\breakpoint$. 
  
An \emph{interval} in a genome is a set of markers that appear consecutively 
in the genome. Given two different adjacencies $(a~~b)$ and $(c~~d)$ in a genome 
$A$ such that $(a~~b) < (c~~d)$, $[b~;~c]$ denotes the interval of $A$ beginning 
with marker $b$ and ending with marker $c$.

In this paper, we consider two types of rearrangement operations called \emph{block interchange (BI)} and  \emph{double-cut-and-join (DCJ)}.

A \emph{block interchange} (\bi) on a genome $G$ is a rearrangement 
operation that acts on four adjacencies in $G$, 
$(a~~b) <  (c~~d) \leq (u~~v) < (x~~y)$ such that the intervals $[b~;~c]$ and 
$[v~;~x]$ do not overlap, 
swapping the intervals $[b~;~c]$ and $[v~;~x]$.
For example, the following block interchange acting on adjacencies 
$(\snd{1}~~\fst{2})< (\fst{6}~~\snd{6})< (\snd{3}~~\fst{8})< (\snd{8}~~\snd{7})$
consists in swapping the intervals $[\fst{2},\fst{6}]$ and $[\fst{8},\snd{8}]$. 
\begin{center}
$(\circ~~\fst{1}~~\snd{1}~\breakpoint~\mathbf{\fst{2}~~\fst{3}~~\snd{2}~~\fst{4}~~\fst{5}~~\fst{6}} ~\breakpoint~\snd{6}~~\fst{7}~~\snd{3} ~\breakpoint~~\mathbf{\fst{8}~~\snd{4}~~\fst{9}~~\snd{5}~~\snd{8}}~ \breakpoint~\snd{7}~~\snd{9}~~\circ)$

$\downarrow$

$(\circ~~\fst{1}~~\snd{1}~~\mathbf{\fst{8}~~\snd{4}~~\fst{9}~~\snd{5}~~\snd{8}}~~\snd{6}~~\fst{7}~~\snd{3}~~\mathbf{\fst{2}~~\fst{3}~~\snd{2}~~\fst{4}~~\fst{5}~~\fst{6}}~~\snd{7}~~\snd{9}~~\circ)$
\end{center}

A \emph{double-cut-and-join} (DCJ) operation on a genome $G$ cuts two different 
adjacencies in $G$ and glues pairs of the four exposed extremities to form two 
new adjacencies. 
Here, we focus on two types of DCJ operations called \emph{excision} and \emph{integration}.

An \emph{excision} is a DCJ operation acting on a single chromosome by extracting an interval from it, making this interval a circular chromosome, and making the remainder a single chromosome (1 join). For example, the following excision extracts the circular chromosome $(\fst{2}~~\fst{3}~~\fst{4})$:
$$(\circ~\fst{1}~\breakpoint~\mathbf{\fst{2}~~\fst{3}~~\fst{4}}~\breakpoint~\fst{5}~~\fst{6}~\circ) \rightarrow (\mathbf{\fst{2}~~\fst{3}~~\fst{4}}) (\circ \fst{1}~~\fst{5}~~\fst{6}~\circ)$$ 

An \emph{integration} is the inverse of an excision; it is a DCJ operation that acts on two chromosomes, one being a circular chromosome, to produce a single chromosome.  For example, the following operation is an integration of the circular chromosome $(\fst{2}~~\fst{3}~~\fst{4})$:
$$\mathbf{(\fst{2}~\breakpoint~\fst{3}~~\fst{4}}) (\circ \fst{1}~~\fst{5}~~\fst{6}~\breakpoint~\circ) \rightarrow (\circ~\fst{1}~~\fst{5}~~\fst{6}~~\mathbf{\fst{3}~~\fst{4}~~\fst{2}}~\circ)$$

We now give an obvious, but very useful, property linking BI operations to DCJ operations.

\begin{property}
\label{1BIto2DCJ}
A single BI operation on a linear chromosome is equivalent to two DCJ operations: an excision followed by an integration.
\end{property}

\begin{proof}
Let $(\circ~~1~~U~~2~~V~~3~~\circ)$ be a genome, $U$ and $V$ the two intervals 
that are to be swapped by a block interchange operation, $1$ $2$ and $3$ the 
intervals constituting the rest of the genome (note that each of them may be 
empty). 

The first DCJ operation is the excision that produces the adjacency
$(1~~V)$ by extracting and circularizing the interval $[U~;~2]$: 
$$(\circ~~1~\breakpoint~U~~2~\breakpoint~V~~3~~\circ) \rightarrow
(\circ~~1~~V~~3~~\circ)(U~~2~)$$

The second DCJ operation is the integration that produces the adjacency
$(U~~3)$ by reintegrating the circular chromosome $(U~~2)$ in the
appropriate way:
$$(\circ~~1~~V~\breakpoint~3~~\circ)(U~~2~\breakpoint) \rightarrow (\circ~~1~~V~~2~~U~~3~~\circ)$$\qed
\end{proof}

A \emph{rearrangement scenario} between two genomes $A$ and $B$ is a sequence of rearrangement operations allowing to transform $A$ into $B$.

\begin{definition}
A \emph{BI (resp. DCJ) scenario} is a rearrangement scenario composed of BI (resp. DCJ) operations.
\end{definition}


The length of a rearrangement scenario is the number of rearrangement operations composing the scenario. 
\begin{definition}
The \emph{BI (resp. DCJ) distance} between two genomes $A$ and $B$, denoted by $d_{BI}(A,B)$ (resp. $d_{DCJ}(A,B)$), is the minimal length of a  BI (resp. DCJ) scenario between $A$ and $B$.
\end{definition}

\subsection*{Genome Halving}
\label{sec:halving}

We now state the genome halving problem considered in this paper.

\begin{definition}
Given a rearranged duplicated genome $G$ composed of a single linear chromosome, the \emph{BI halving problem} consists in finding a tandem-duplicated genome $H$ such that the BI distance between $G$ and $H$ is minimal.
\end{definition}

In order to solve the BI halving problem, we use some results on the \emph{DCJ halving problem} that were stated in \cite{Mixtacki08} as a starting point. Unlike the BI halving problem, the aim of the DCJ halving problem is to find a perfectly duplicated genome instead of a tandem-duplicated genome.

\begin{definition}[\cite{Mixtacki08}]
Given a rearranged duplicated genome $G$, the \emph{DCJ genome halving problem} 
consists in finding a perfectly duplicated genome $H$ such that the DCJ distance
between $G$ and $H$ is minimal.
\end{definition}

The BI and DCJ genome halving problems lead to two definitions of \emph{halving distances}: the \emph{BI halving distance} (resp. \emph{DCJ halving distance}) of a rearranged duplicated genome $G$ is the minimum BI (resp. DCJ) distance between $G$ and any tandem-duplicated genome (resp. any perfectly duplicated genome) ; we denote it by $d^t_{BI}(G)$ (resp. $d^p_{DCJ}(G)$).\\

\section{Lowerbound for the BI halving distance}
\label{sec:lb}

In this section we give a lowerbound on the BI halving distance of a rearranged duplicated genome. We use a data structure representing the genome called the \emph{natural graph} introduced in \cite{Mixtacki08}.

\begin{definition}
The natural graph of a rearranged duplicated genome $G$, denoted by $\NG(G)$, is the graph whose vertices are the \emph{adjacencies} of $G$, and for any marker $u$ there is one edge between  $\aff{u}{v}$ and  $\asf{u}{w}$, and  one edge between  $\aff{x}{u}$ and  $\afs{y}{u}$.
\end{definition}

Note that the number of edges in the natural graph of a genome $G$ containing $n$ distinct markers, each one present in two copies, is always $2n$. Moreover, since every vertex has degree one or two, then the natural graph consists only of cycles and paths. For example, the natural graph of genome $G = (\circ~~ \fst{1}~~\snd{2}~~\snd{1}~~\snd{4}~~\fst{3}~~\fst{4}~~\snd{3}~~\fst{2}~~\circ)$ is depicted in Fig. \ref{fig:NGdef}.

\begin{figure}[htbp]
    \centering
\begin{tikzpicture}
    \matrix[row sep=1mm,column sep=2mm,ampersand replacement=\&] {
      \node[draw,circle] (LA1)  {\paff{\circ}{1}}; \&
      \node[draw,circle] (B2A2) {\pass{2}{1}}; \&
      \node[draw,circle] (B1R)  {\paff{2}{\circ}}; \&
      \node[draw,circle] (A1B2) {\pafs{1}{2}}; \&
      \node[draw,circle] (A2D2) {\pass{1}{4}}; \&
      \node[draw,circle] (C1D1) {\paff{3}{4}}; \&
      \node[draw,circle] (C2B1) {\pasf{3}{2}}; \&
      \node[draw,circle] (D2C1) {\pasf{4}{3}}; \&
      \node[draw,circle] (D1C2) {\pafs{4}{3}}; \\
    };

    \draw (A1B2) to [out=60,in=120] (A2D2);
    \draw (A1B2) to [out=-60,in=-120] (C2B1);

    \draw (B2A2) to [out=0,in=180] (B1R);
    \draw (B2A2) to [out=180,in=0] (LA1);

    \draw (A2D2) to [out=60,in=120] (C1D1);

    \draw (D2C1) to [out=60,in=120] (D1C2);
    \draw (D2C1) to [out=-60,in=-120] (D1C2);

    \draw (C1D1) to [out=60,in=120] (C2B1);

\end{tikzpicture}

\caption{The natural graph of genome $G = (\circ~~ \fst{1}~~\snd{2}~~\snd{1}~~\underline{\snd{4}~~\fst{3}}~~\underline{\fst{4}~~\snd{3}}~~\fst{2}~~\circ)$ ; it is composed of one path and two cycles.}
\label{fig:NGdef}
\end{figure}



\begin{definition}
    Given an integer $k$, a \emph{$k-$cycle} (resp. \emph{$k-$path}) in the 
    natural graph of a rearranged duplicated genome is a cycle (resp. path) 
    that contains $k$ edges. If $k$ is even, the cycle (resp. path) is called 
   \emph{even}, and \emph{odd} otherwise.
\end{definition}

\def\EC{\ensuremath{\mbox{EC}}}
\def\OP{\ensuremath{\mbox{OP}}}

Based on the natural graph, a formula for the DCJ halving distance was given in \cite{Mixtacki08}. Given a rearranged duplicated genome $G$ such that the number of even cycles and the number of odd paths in $\NG(G)$
are respectively denoted by $\EC$  and $\OP$, the  DCJ halving distance of $G$ is:
    $$d^p_{DCJ}(G) = n - \EC - \left\lfloor \frac{\OP}{2} \right\rfloor $$

In the case of the BI halving distance, some peculiar properties of the natural graph need to be stated, allowing to simplify the formula of the DCJ halving distance, and leading to a lowerbound on the BI halving distance.

In the following properties, we assume that $G$ is a genome composed of a 
single linear chromosome containing $n$ distinct markers, each one present 
in two copies in $G$. 


\begin{property}
The natural graph $\NG(G)$ contains only even cycles and paths: 
\label{prop:EC}
\label{prop:1EP}
\begin{enumerate}
\item All cycles in the natural graph $\NG(G)$ are even.
\item The natural graph $\NG(G)$ contains only one path, and this path is even.
\end{enumerate}
\end{property}

\begin{proof}
    First, if \aff{a}{x} is a vertex of the
    graph that belongs to a cycle $C$, then there exists an edge between
    \aff{a}{x} and a vertex \asf{a}{y}. These two
    adjacencies are the only two containing a copy of the marker $a$ at the
    first position. So, if we consider the set of all the first markers in
    all adjacencies contained in the cycle $C$, then each marker in this set is
    present exactly twice. Therefore, the cycle $C$ is an even cycle.
  
    Secondly, the graph contains exactly two vertices (adjacencies) containing the
    marker $\circ$ which are both necessarily ends of a path in $\NG(G)$. Thus
   there can be only one path in the graph. Since the number of edges in the 
   graph is even and all cycles are even, then the single path is also even. \qed
\end{proof}

We now give a lowerbound on the minimum length of DCJ scenario transforming 
$G$ into a tandem-duplicated genome.

\begin{lemma}
    Let $d^t_{DCJ}(G)$  be the minimum DCJ distance between $G$ and any 
   tandem-duplicated genome. If $\NG(G)$ contains $C$ cycles then a 
   lowerbound on $d^t_{DCJ}(G)$ is given by:
    $$d^t_{DCJ}(G) \geq n - C - 1$$
\label{DCJdistLB}
\end{lemma}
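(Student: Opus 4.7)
My plan is to derive the lower bound from the DCJ halving distance formula of Mixtacki~\cite{Mixtacki08}, together with the observation that a single DCJ turns any tandem-duplicated genome into a perfectly duplicated one.

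I would first specialize the known formula $d^p_{DCJ}(G) = n - \EC - \lfloor \OP/2 \rfloor$ to our setting. Since $G$ is a single linear chromosome with $n$ distinct markers each appearing twice, Property~\ref{prop:EC} guarantees that every cycle of $\NG(G)$ is even and that the unique path of $\NG(G)$ is even as well. Consequently $\EC = C$ and $\OP = 0$, so $d^p_{DCJ}(G) = n - C$.

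I would next exhibit, for any tandem-duplicated genome $T$, a perfectly duplicated genome at DCJ distance at most $1$ from $T$. Writing $T$ in its expanded form $(\circ~\fst{a_1}\fst{a_2}\cdots\fst{a_n}~\snd{a_1}\snd{a_2}\cdots\snd{a_n}~\circ)$, a single DCJ cuts the two telomeric adjacencies $(\circ~\fst{a_1})$ and $(\snd{a_n}~\circ)$ and reglues $\fst{a_1}$ with $\snd{a_n}$ (joining the two $\circ$ ends into the empty chromosome), producing the circular chromosome $(\fst{a_1}\fst{a_2}\cdots\fst{a_n}\snd{a_1}\snd{a_2}\cdots\snd{a_n})$. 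Each of its adjacencies $(\fst{a_i}~\fst{a_{i+1}})$ is paired with $(\snd{a_i}~\snd{a_{i+1}})$, and the wrap-around adjacency $(\snd{a_n}~\fst{a_1})$ is paired with the central one $(\fst{a_n}~\snd{a_1})$, so every adjacency is a double-adjacency and the result is perfectly duplicated.

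Combining the two ingredients finishes the argument: if $T^*$ realizes the minimum in $d^t_{DCJ}(G) = d_{DCJ}(G, T^*)$, then appending the extra DCJ described above yields a DCJ scenario from $G$ to a perfectly duplicated genome of length $d^t_{DCJ}(G) + 1$. Hence $d^p_{DCJ}(G) \leq d^t_{DCJ}(G) + 1$, and substituting $d^p_{DCJ}(G) = n - C$ rearranges to $d^t_{DCJ}(G) \geq n - C - 1$. I do not anticipate any serious obstacle in executing this plan; the only point requiring mild care is to verify that the single DCJ in the second step really produces a \emph{perfectly} duplicated genome, which reduces to an explicit check on the resulting circular chromosome that every adjacency, including the newly formed wrap-around one, is a double-adjacency.
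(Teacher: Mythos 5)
Your proposal is correct and follows exactly the paper's argument: specialize the Mixtacki formula to $d^p_{DCJ}(G)=n-C$ using Property~\ref{prop:EC}, then use the fact that one DCJ turns a tandem-duplicated genome into a perfectly duplicated one to get $d^p_{DCJ}(G)\leq d^t_{DCJ}(G)+1$. The only difference is that you spell out the explicit circularizing DCJ and verify the resulting genome is perfectly duplicated, a step the paper merely asserts.
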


\begin{proof}
    First, since all cycles  of $\NG(G)$ are even and $\NG(G)$ contains no odd 
    path, 
   then, from the DCJ halving distance formula, the DCJ halving distance of 
   $G$   is $d^p_{DCJ}(G) = n - C$.

    Now, since any  tandem-duplicated genome can be transformed into 
    a perfectly duplicated genome with one DCJ, then $d^t_{DCJ} + 1
    \geq d^p_{DCJ}$. Therefore, we have $ d^t_{DCJ} \geq d^p_{DCJ} - 1 \geq
    n - C - 1$.  \qed
\end{proof}

We are now ready to state a lowerbound on the BI halving distance of a rearranged duplicated genome $G$.

\begin{theorem}
    If $\NG(G)$ contains $C$ cycles, then a lowerbound on the BI halving distance
    is given by: 
$$d^t_{BI}(G) \geq \left \lfloor \frac{n - C}{2} \right \rfloor$$
\end{theorem}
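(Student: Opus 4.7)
The plan is to leverage the two earlier results already in the paper: Property \ref{1BIto2DCJ}, which says that every single BI is simulated by exactly two DCJs (an excision followed by an integration), and Lemma \ref{DCJdistLB}, which gives $d^t_{DCJ}(G) \geq n - C - 1$. Together these almost immediately produce a bound of the right order; the only delicate point is to show the floor in the statement is correct.

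First, I would argue the conversion. Suppose $d^t_{BI}(G) = k$, and fix a parsimonious BI scenario of length $k$ transforming $G$ into some tandem-duplicated genome $H$. Applying Property \ref{1BIto2DCJ} to each BI in this scenario yields a DCJ scenario of length exactly $2k$ from $G$ to $H$. Since $H$ is a tandem-duplicated genome, this gives
\[
d^t_{DCJ}(G) \leq d_{DCJ}(G,H) \leq 2k = 2\, d^t_{BI}(G).
\]

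Next, I would plug in the lower bound from Lemma \ref{DCJdistLB}, namely $d^t_{DCJ}(G) \geq n - C - 1$, to obtain
\[
2\, d^t_{BI}(G) \geq n - C - 1,
\]
so $d^t_{BI}(G) \geq (n-C-1)/2$.

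The only remaining step is to upgrade $(n-C-1)/2$ to $\lfloor (n-C)/2\rfloor$, which is a routine integrality argument: since $d^t_{BI}(G)$ is an integer, if $n-C$ is even then $d^t_{BI}(G) \geq \lceil (n-C-1)/2 \rceil = (n-C)/2$, and if $n-C$ is odd then $d^t_{BI}(G) \geq (n-C-1)/2 = \lfloor (n-C)/2 \rfloor$; in both cases we conclude $d^t_{BI}(G) \geq \lfloor (n-C)/2 \rfloor$. I do not anticipate any real obstacle here; the main conceptual work has already been done in Property \ref{prop:EC} (which collapses Mixtacki's general formula $n - \EC - \lfloor \OP/2 \rfloor$ to $n-C$ in our setting) and in Lemma \ref{DCJdistLB}, so the theorem is essentially a one-line corollary combined with a parity check.
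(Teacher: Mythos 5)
Your proposal is correct and follows essentially the same route as the paper: convert the optimal BI scenario into a DCJ scenario of twice the length via Property \ref{1BIto2DCJ}, apply the lower bound $d^t_{DCJ}(G) \geq n-C-1$ from Lemma \ref{DCJdistLB}, and finish with an integrality/parity argument to obtain the floor. The paper phrases this as a proof by contradiction while you argue directly, but the content is identical and your parity case analysis is sound.
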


\begin{proof}

We denote by $\ell(S)$ the length of a rearrangement scenario $S$.
   Let $S_{BI}$ be a BI scenario transforming $G$ into a
   tandem-duplicated genome.
    From property \ref{1BIto2DCJ}, we have that $S_{BI}$ is equivalent to
    a DCJ scenario $S_{DCJ}$ such that  $\ell(S_{DCJ})=2*\ell(S_{BI})$. 
    Now, suppose that $\ell(S_{BI}) < \lfloor \frac{n - C}{2}
    \rfloor$, then $\ell(S_{BI}) \leq \lfloor \frac{n - C}{2}
    \rfloor - 1 \leq \lceil \frac{n - C - 1}{2} \rceil - 1$.
    
    This implies $\ell(S_{DCJ}) \leq 2\lceil \frac{n - C - 1}{2}
    \rceil - 2 \leq n - C - 2 < n - C - 1$. Thus, from Lemma 
   \ref{DCJdistLB} we have $\ell(S_{DCJ}) < d^t_{DCJ}$ 
    which contradicts the fact that $d^t_{DCJ}$ is the minimal number
    of DCJ
    operations required to transform $G$ into a tandem-duplicated genome.

    In conclusion, we always have $d^t_{BI}(G) \geq \lfloor \frac{n - C}{2} \rfloor$.  \qed
\end{proof}

\section{Formula for the BI halving distance}
\label{sec:dist}

In this section, we show that the BI halving distance of a rearranged duplicated genome $G$ with $n$ distinct markers such that  $\NG(G)$ contains $C$ cycles is exactly:

$$d^t_{BI}(G) = \left \lfloor \frac{n - C}{2} \right \rfloor$$

In other words, we show that enforcing the constraint that 2 consecutive DCJ have to be equivalent to a BI doesn't change the distance (even though it obviously restricts the DCJ that can be performed at each step of the scenario). 







In
the following, $G$ denotes a rearranged duplicated genome $G$
constisting in a single linear chromosome with $n$ distinct markers
after the reduction process, and such that $\NG(G)$ contains $C$ cycles.
We begin by recalling some useful definitions and properties of the DCJ
operations that allow to decrease the DCJ halving distance by $1$ in the resulting genome. 

\begin{definition}
A DCJ operation on $G$ producing genome $G'$ is \emph{sorting} if it decreases the DCJ halving distance by $1$: $d^p_{DCJ}(G') = d^p_{DCJ}(G)-1 = n-C-1$.
\end{definition}

Since the number of distinct markers $G'$ is $n$ and $d^p_{DCJ}(G') = n-C-1$, then $\NG(G')$ contains $C+1$ cycles. In other words, a DCJ operation is sorting if it increases the number of cycles in  $\NG(G)$ by $1$.

Given $(\fst{u}~~\fst{v})$ an adjacency of $G$ that is not a double-adjacency,
we denote by $\DCJ(\fst{u}~~\fst{v})$ the DCJ operation that cuts adjacencies $(\snd{u}~~\fst{x})$ and  $(\fst{y}~~\snd{v})$ to form adjacencies $(\snd{u}~~\snd{v})$ and  $(\fst{y}~~\fst{x})$, making $(\fst{u}~~\fst{v})$ a double-adjacency. 

\begin{property}
Let  $(\fst{u}~~\fst{v})$ be an adjacency of $G$ that is not a double-adjacency,
 $\DCJ(\fst{u}~~\fst{v})$ is a sorting DCJ operation.
\end{property}
\begin{proof}
$\DCJ(\fst{u}~~\fst{v})$ increases the number of cycles in  $\NG(G)$ by $1$, by creating a new cycle composed of adjacencies  $(\fst{u}~~\fst{v})$ and  $(\snd{u}~~\snd{v})$. \qed
\end{proof}













\begin{figure}[!h]
\centering
\begin{tikzpicture}
    \matrix[row sep=1mm,column sep=8mm,ampersand replacement=\&] {

\node (L) {$(~\circ$}; \&
\node (f2) {$\fst{2}$}; \&
\node (f1) {$\fst{1}$}; \&
\node (s2) {$\snd{2}$}; \&
\node (f3) {$\fst{3}$}; \&
\node (s1) {$\snd{1}$}; \&
\node (s3) {$\snd{3}$}; \&
\node (R) {$\circ~)$}; \\
};
\draw[very thin] (0.1,-0.3) -- ++(1,0) -- ++(0,0.6) -- ++(-1,0) --
++(0,-0.6) -- ++(0,0.6) -- ++(1,0) -- ++(0.5,1) node[above]
{$I\aff{2}{1} = \oiss{2}{1}$};
\draw[very thick] (-3.4,-0.4) -- ++(5.8,0) -- ++(0,0.8) -- ++(-5.8,0)
-- ++(0,-0.8) -- ++(-0.3,-0.7) node[below] {$I\afs{1}{2} = \cifs{2}{1}$};
\draw[very thin] (-2.6,-0.6) -- ++(5.2,0) -- ++(0,1.2) -- ++(-5.2,0) --
++(0,-1.2) -- ++(0,1.2) -- ++(-0.5,1) node[above] {$I\asf{2}{3} = \oifs{2}{3}$};
\draw[very thick] (-2.3,-0.5) -- ++(5.5,0) -- ++(0,1) -- ++(-5.5,0) --
++(0,-1) -- ++(5.5,0) -- ++(0.3,-0.7) node[below] {$I\afs{3}{1} = \cifs{1}{3}$};
\draw[very thin] (-1.1,-0.3) -- ++(1,0) -- ++(0,0.6) -- ++(-1,0) --
++(0,-0.6) -- ++(0,0.6) -- ++(-0.3,0.7) node[above] {$I\ass{1}{3} = \oiff{1}{3}$};

\end{tikzpicture}
\\

\caption{ $\mathcal{I}(G) ~ = ~ \left\{ ~~ \oiss{2}{1} ~,~~
      \mathbf{\cifs{2}{1}} ~,~~ \oifs{2}{3} ~,~~ \mathbf{\cifs{1}{3}}
      ~,~~ \oiff{1}{3} ~~ \right\}$, the set of intervals of $G =
  (\circ~~\fst{2}~~\fst{1}~~\snd{2}~~\fst{3}~~\snd{1}~~\snd{3}~~\circ
  )$ depicted as boxes. The two boxes with thick lines represent two 
  overlapping intervals of $\mathcal{I}(G)$ inducing a \BI\xspace which 
  exchanges $\fst{2}$ and $\snd{3}$.}
    \label{fig:intervaldef}
\end{figure}

\begin{definition}
Let $(\fst{u}~~\fst{v})$,  $(\snd{u}~~\fst{x})$, and  $(\fst{y}~~\snd{v})$ be adjacencies of $G$. The \emph{interval} of the adjacency  $(\fst{u}~~\fst{v})$, denoted by $I(\fst{u}~~\fst{v})$ is either:
\begin{itemize}
\item the interval $[\snd{x}~;~\snd{y}]$ if $(\snd{u}~~\fst{x}) < (\fst{y}~~\snd{v})$.  In this case, we denote it by $]\snd{u}~;~\snd{v}[$, or
\item the interval $[\snd{v}~;~\snd{u}]$ if $(\fst{y}~~\snd{v}) < (\snd{u}~~\fst{x})$.
\end{itemize}
\end{definition}

For example, the intervals of the adjacencies in genome $(\circ~~\fst{2}~~\fst{1}~~\snd{2}~~\fst{3}~~\snd{1}~~\snd{3}~~\circ )$ are depicted in Fig \ref{fig:intervaldef}.
Note that, given an adjacency  $(\fst{u}~~\fst{v})$ of $G$,  if  $(\fst{u}~~\fst{v})$ is a double-adjacency then the interval $I(\fst{u}~~\fst{v})$ is empty, otherwise  $\DCJ(\fst{u}~~\fst{v})$ is the excision operation that extracts the interval $I(\fst{u}~~\fst{v})$ to make it circular, thus producing the adjacency $(\snd{u}~~\snd{v})$.


Two intervals  $I\aff{a}{b}$ and $I\aff{x}{y}$ are said \emph{overlapping} if
their intersection is non-empty, and none of the intervals is included in the 
other.
It is easy to see, following Property \ref{1BIto2DCJ}, that given two
adjacencies $\aff{a}{b}$ and $\aff{x}{y}$ of $G$ such that
$I\aff{a}{b}$ and $I\aff{x}{y}$ are non-empty intervals, the
successive application of $\DCJ\aff{a}{b}$ and $\DCJ\aff{x}{y}$ is
equivalent  to a BI operation if and only if $I\aff{a}{b}$ and
$I\aff{x}{y}$ are overlapping. Note that in this case neither
$\aff{a}{b}$, nor  $\aff{x}{y}$ can be double-adjacencies in $G$ since
their  intervals are non-empty. Figure \ref{fig:intervaldef} shows an
example of two overlapping intervals.

The following property states precisely in which case the successive application of $\DCJ\aff{a}{b}$ and $\DCJ\aff{x}{y}$ decreases the DCJ halving distance by $2$, meaning that both DCJ operations are sorting.

\begin{property}
Given two adjacencies $\aff{a}{b}$ and $\aff{x}{y}$ of $G$, such that 
$I\aff{a}{b}$ and $I\aff{x}{y}$ are overlapping, the successive application of $\DCJ\aff{a}{b}$ and $\DCJ\aff{x}{y}$ decreases the DCJ halving distance by $2$ if and only if  $x \neq \snd{a}$ and $y \neq \snd{b}$.
\end{property}
\begin{proof}
If $x \neq \snd{a}$ and $y \neq \snd{b}$, then the successive application of $\DCJ\aff{a}{b}$ and $\DCJ\aff{x}{y}$ increases the number of cycles in $\NG(G)$ by $2$, by creating two new 2-cycles. Otherwise, $\DCJ\aff{a}{b}$ first creates a new cycle that is then destroyed by  $\DCJ\aff{x}{y}$.
\qed
\end{proof}

We denote by $\mathcal{I}(G)$, the set of intervals of all the adjacencies of $G$ that do not contain marker $\circ$.

\begin{remark}
Note that, if $G$ contains $n$ distinct markers, then there are $2n-1$ adjacencies in $G$ that do not contain marker $\circ$,  defining $2n-1$ intervals in  $\mathcal{I}(G)$.
\label{maxEdges}
\end{remark}

\begin{definition}
Two  intervals $I\aff{a}{b}$ and $I\aff{x}{y}$ of $\mathcal{I}(G)$ are said \emph{compatible} if they are overlapping and  $x \neq \snd{a}$ and $y \neq \snd{b}$.
\end{definition}


In the following, we prove the BI halving distance formula by showing that if genome $G$ contains more than three distinct markers, $n~>~3$,  then there exist two compatible intervals in $\mathcal{I}(G)$, and if $n=2$ or $n=3$ then $d^t_{BI}(G)=1$ and $2 \leq d^p_{DCJ}(G) \leq 3$.
This means that there exists a \BI ~halving scenario $S$ such that all \BI ~operations in $S$, possibly excluding the last one, are equivalent to two successive sorting DCJ operations.




From now on, until the end of the section,
$\aff{a}{b}$ is an adjacency of $G$ that is not a double-adjacency, $A$ is a genome consisting in a linear chromosome $\mathfrak{L}$ and a circular chromosome $\mathfrak{C}$, obtained by applying the \emph{sorting DCJ}, $\DCJ\aff{a}{b}$, on $G$.

If there exists an interval $I\aff{x}{y}$ in $\mathcal{I}(G)$ compatible with  $I\aff{a}{b}$, then applying $\DCJ\aff{x}{y}$ on $A$ consists in the integration of the circular chromosome $\mathfrak{C}$ into the linear chromosome $\mathfrak{L}$ such that the adjacency $\ass{x}{y}$ is formed.
Such an \emph{integration} can only be performed by cutting an adjacency $\asf{x}{u}$ in $\mathfrak{C}$ and an adjacency $\afs{v}{y}$ in $\mathfrak{L}$ (or inversely) to produce adjacencies $\ass{x}{y}$ and $\aff{v}{u}$.  This means that there must be an adjacency \aff{x}{y} in either $\mathfrak{C}$ or $\mathfrak{L}$ such that $\snd{x}$ is in $\mathfrak{C}$ and $\snd{y}$ in $\mathfrak{L}$ or inversely.
Hence, we have the following property :



\begin{property}
$\mathfrak{C}$ \emph{cannot} be reintegrated into $\mathfrak{L}$ by
applying a sorting DCJ,  $\DCJ\aff{x}{y}$, on $A$ if and only if either:

\begin{itemize} 
\item[(1)] for any adjacency $\aff{x}{y}$ in $\mathfrak{C}$
  (resp. $\mathfrak{L}$), markers $\snd{x}$ and $\snd{y}$ are in
  $\mathfrak{L}$ (resp. $\mathfrak{C}$), 
  or

\item[(2)] for any adjacency $\aff{x}{y}$ in $\mathfrak{C}$ (resp. $\mathfrak{L}$), markers $\snd{x}$ and $\snd{y}$ are also in $\mathfrak{C}$ (resp. $\mathfrak{L}$). 
\end{itemize}

\label{formuleENR}
\end{property}
\begin{proof}
If there exists no  adjacency $\aff{x}{y}$ in $A$ such that $\snd{x}$ is in $\mathfrak{C}$ and $\snd{y}$ in $\mathfrak{L}$ or inversely, then $A$ necessarily satisfies either $(1)$, or $(2)$. \qed
\end{proof}






\begin{definition}
An interval $I\aff{a}{b}$ in  $\mathcal{I}(G)$ is called
\emph{interval of type 1} (resp. \emph{interval of type 2}) if $\DCJ\aff{a}{b}$ produces a genome $A$ satisfying configuration $(1)$  (resp.  configuration $(2)$) described in Property \ref{formuleENR}.
\end{definition}


For example, in genome  $(\circ~~\fst{2}~~\fst{1}~~\snd{1}~~\fst{3}~~\snd{2}~~\snd{3}~~\circ)$,  $I\asf{1}{3}$ is of type 1 as  $\DCJ\asf{1}{3}$ produces genome 
$(\circ~~\fst{2}~~\fst{1}~~\snd{3}~~\circ) ~ (\snd{1}~~\fst{3}~~\snd{2})$ ;  $I\ass{2}{3}$ is of type 2 as  $\DCJ\ass{2}{3}$ produces genome  $(\circ~~\fst{2}~~\fst{3}~~\snd{2}~~\snd{3}~~\circ) ~ (\fst{1}~~\snd{1})$.

Now we give the maximum numbers of intervals of type 1 and type 2 that can be contained in genome $G$.

\begin{lemma}
The maximum number of intervals of type 1 in $\mathcal{I}(G)$ is 2.
\label{maxType1}
\end{lemma}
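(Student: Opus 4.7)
The plan is to assume for contradiction that $\mathcal{I}(G)$ contains three type-1 intervals and derive a contradiction from the paralog pairings this would force. Two preliminary observations anchor the argument. First, the type-1 condition forces the paralog map to swap $\mathfrak{C}$ with $\mathfrak{L}$ bijectively, so every type-1 interval has exactly $n$ markers. Second, because $\snd{u}$ and $\snd{v}$ sit in $\mathfrak{L}$ at the two positions flanking $I = I(\fst{u}~\fst{v})$, their paralogs $\fst{u}$ and $\fst{v}$ must lie in $\mathfrak{C} = I$; hence the defining adjacency $(\fst{u}~\fst{v})$ is realized as two consecutive positions strictly inside $I$.

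Suppose then $I_1, I_2, I_3 \in \mathcal{I}(G)$ are three type-1 intervals, ordered by their starting positions $p_1 < p_2 < p_3$ (equal length $n$ then forces $q_1 < q_2 < q_3$). Avoiding the telomere in each defining adjacency gives $p_k \geq 2$ and $q_k \leq 2n-1$, whence $p_k \leq n$ and $q_k \geq n+1$. Accordingly, the seven contiguous regions
\[
L = [1, p_1-1],\ A = [p_1, p_2-1],\ B = [p_2, p_3-1],\ C = [p_3, q_1],\ D = [q_1+1, q_2],\ E = [q_2+1, q_3],\ R = [q_3+1, 2n]
\]
are all non-empty, the critical non-trivial inequality being $p_3 \leq n < n+1 \leq q_1$, which makes $C \neq \emptyset$. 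Applying each type-1 condition (the paralog map swaps $I_k$ with $\overline{I_k}$) and intersecting the three resulting constraints pins down the paralog pairings: paralogs of $A = I_1 \cap \overline{I_2} \cap \overline{I_3}$ lie in $\overline{I_1} \cap I_2 \cap I_3 = D$; paralogs of $B$ lie in $E$; paralogs of $C$ lie in $L \cup R$; paralogs of $L$ and of $R$ both lie in $C$. In particular, the defining adjacency $(\fst{u}_1~\fst{v}_1)$ of $I_1$ has $\snd{u}_1$ at position $p_1 - 1 \in L$ and $\snd{v}_1$ at position $q_1 + 1 \in D$, forcing $\fst{u}_1 \in C$ and $\fst{v}_1 \in A$. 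But $C$ and $A$ are separated by the non-empty region $B$, so any two positions, one from $C$ and one from $A$, differ by at least $p_3 - (p_2 - 1) \geq 2$, contradicting the fact that $(\fst{u}_1~\fst{v}_1)$ is an adjacency of the linear chromosome $G$.

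The main obstacle is simply the region bookkeeping: one must verify that all seven regions are non-empty using $p_k \in \{2, \ldots, n\}$, and check that the appropriate three-way intersections of $I_k$'s and $\overline{I_k}$'s really produce the claimed pairings between $A/D$, $B/E$ and $C/(L \cup R)$. Beyond this, the adjacency contradiction that closes the argument is immediate.
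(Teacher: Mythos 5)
Your seven-region argument is a genuinely different (and in itself elegant) route from the paper's, but as written it does not prove the lemma, for two reasons.

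First, and most importantly, you assume the three type-1 intervals have pairwise distinct starting positions $p_1<p_2<p_3$, i.e.\ are pairwise distinct as subsets of positions. But $\mathcal{I}(G)$ contains one interval per adjacency ($2n-1$ of them, per the Remark preceding the lemma), and two \emph{different} adjacencies can define the \emph{same} interval: one via the first clause of the definition ($I=\oiss{u}{v}$, with $\snd{u},\snd{v}$ flanking $I$ from outside) and one via the second ($I=\ciss{v}{u}$, with $\snd{v},\snd{u}$ the endpoints of $I$). The paper's proof is built exactly around this: it argues that at most one \emph{set} can be of type~1 and that at most two adjacencies can realize it, giving the bound $1\times 2=2$. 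Your argument, even if completed, would show that at most two distinct type-1 \emph{sets} exist, which still allows up to four type-1 elements of $\mathcal{I}(G)$ — and it is the adjacency-indexed count that the distance theorem later needs ($2n-1$ intervals, at most $n+2$ of type 1 or 2). So the case where two of your three intervals coincide as sets must be treated, and your construction collapses there (the region $A$ becomes empty).

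Second, your preliminary observation handles only the first clause of the interval definition: you place $\snd{u},\snd{v}$ at the two positions flanking $I$ in $\mathfrak{L}$ and conclude $\fst{u},\fst{v}\in\mathfrak{C}=I$. In the second clause, $\snd{v}$ and $\snd{u}$ are the first and last markers \emph{of} $I$ (hence in $\mathfrak{C}$), the type-1 condition puts $\fst{u},\fst{v}$ in $\mathfrak{L}=I^{c}$, and nothing forces $p_k\geq 2$ (the cut adjacency $(\fst{y}~\snd{v})$ may involve $\circ$), so the region $L$ may be empty and the closing contradiction is set up differently. The intersection computations $\sigma(A)\subseteq D$, $\sigma(D)\subseteq A$, $\sigma(L\cup R)\subseteq C$ and the final adjacency contradiction are all correct in the case you do treat, but both omitted cases are real and the second one is where the sharp constant~2 actually comes from.
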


\begin{proof}
First, note that there cannot be two intervals $I$ and $J$ of $\mathcal{I}(G)$ 
such that $I \neq J$, and both  $I$ and $J$ are of type 1.
Now, if $I$ is an interval of type 1, there can be at most two different 
adjacencies $\aff{x}{y}$ and $\aff{u}{v}$ such that 
$I\aff{x}{y} = I\aff{u}{v} = I$. In this case $G$ necessarily has a chromosome of the form $(\ldots ~~\snd{x}~~\snd{v}~~\ldots ~~\snd{u}~~\snd{y}~~\ldots)$ or $(\ldots ~~\snd{u}~~\snd{y}~~\ldots ~~\snd{x}~~\snd{v}~~\ldots)$.
Therefore, there are at most two intervals of type 1 in $\mathcal{I}(G)$.   \qed



\end{proof}

\begin{lemma}
The maximum number of intervals of type 2 in  $\mathcal{I}(G)$ is $n$.
\label{maxType2}
\end{lemma}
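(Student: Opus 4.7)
My plan is to exhibit an injection from type 2 adjacencies into the set of $n$ marker names of $G$, which immediately gives the desired bound.

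I would set things up using the chord-diagram view of $G$: regard the $2n$ marker positions as points on a line and, for each marker name, draw a chord joining the two positions where its copies appear. For a type 2 adjacency $(\fst{u}~\fst{v})$ with interval $[p,q]$, the two markers $\fst{u},\fst{v}$ sit consecutively in the complement of the interval, while $\snd{u},\snd{v}$ lie at positions $p-1$ and $q+1$. It follows that exactly one of the chords for $u$ or $v$ lies entirely on one side of the interval (the \emph{non-straddling} chord), while the other has one endpoint on each side. The non-straddling chord is the one whose ``$\snd{\cdot}$'' endpoint is on the same side as the adjacency itself. I assign to each type 2 adjacency the marker name of its non-straddling chord.

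The key step is to show this assignment is injective. Suppose two distinct type 2 adjacencies are both sent to the same name $w$. Then chord $w$, whose positions $\{a,b\}$ are fixed, lies entirely on one side of each of the two intervals $I_1,I_2$. I would perform a case analysis on (i) whether the two adjacencies share the same specific copy of $w$ (both involve $\fst{w}$, or both $\snd{w}$) and (ii) the relative position of chord $w$ with respect to each interval (both chord endpoints left of $I_k$, both right, or chord $w$ bridging the two intervals). In the ``same copy'' case, the two adjacencies both use one of the two neighbours of that copy in the chromosome, and one shows that the straddling chord of one adjacency (which runs from just beside $\fst{w}$ across its interval) must necessarily cross the other adjacency's interval, violating completeness. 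In the ``different copies'' case, chord $w$ sits between $I_1$ and $I_2$, and a similar argument shows that the straddling chord of one adjacency, whose endpoints are pinned by the constraint that $\fst{u},\fst{v}$ are consecutive, crosses the other interval.

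The hardest part will be the ``different copies'' subcase, since the two intervals may lie on opposite sides of chord $w$ and the reasoning requires tracking how the straddling chord of adjacency~1 (with one endpoint just outside $I_1$) is forced to have its other endpoint inside $I_2$, yielding the desired contradiction with completeness of $I_2$. Once injectivity is established, the bound follows immediately: there are $n$ marker names, so there are at most $n$ type 2 adjacencies.
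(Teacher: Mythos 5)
Your construction breaks down before the injectivity argument, because it only covers one of the two branches of the paper's definition of $I\aff{u}{v}$. You assume that for a type 2 adjacency the pair $\fst{u}~\fst{v}$ lies in the complement of its interval with $\snd{u}$ and $\snd{v}$ at the flanking positions $p-1$ and $q+1$; that is the case $\asf{u}{x} < \afs{y}{v}$, where the interval is the open one $]\snd{u}~;~\snd{v}[$. In the other branch the interval is the \emph{closed} interval $[\snd{v}~;~\snd{u}]$, which contains $\snd{u}$ and $\snd{v}$; these two markers then end up in the circular chromosome $\mathfrak{C}$, so the type 2 condition forces $\fst{u}$ and $\fst{v}$ to lie \emph{inside} the interval as well. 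In that configuration both chords for $u$ and for $v$ have both endpoints within the interval, there is no unique non-straddling chord, and your assignment is simply undefined. This case is not vacuous: in $(\circ~\fst{1}~\snd{1}~\fst{2}~\snd{2}~\circ)$ both $I(\fst{1}~\snd{1})$ and $I(\fst{2}~\snd{2})$ are type 2 intervals of exactly this closed kind, and they alone attain the bound $n$; so the closed case cannot be discarded, and you would need both a second assignment rule for it and an injectivity proof that also excludes collisions between the two kinds. A secondary weakness is that even in the open case the injectivity proof is only a plan: the two decisive subcases are left at the level of ``one shows that\dots''.

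For contrast, the paper's argument avoids the interval geometry entirely. It indexes the $2n-1$ intervals of $\mathcal{I}(G)$ by the left element of the defining adjacency, observes that of the two adjacencies $\aff{x}{y}$ and $\asf{x}{z}$ sharing the marker name $x$ in first position at most one can be of type 2, and notes that exactly one marker name contributes only a single such adjacency (the one whose other copy is followed by $\circ$), giving at most $(n-1)+1=n$. If you want to salvage your chord-diagram route, the cleanest fix is probably to prove that same pairing statement in your language rather than to patch the non-straddling-chord map.
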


\begin{proof}
First, note that for two adjacencies $\aff{x}{y}$ and $\asf{x}{z}$ in $G$ that 
do not contain marker $\circ$, if $\aff{x}{y}$ is of type 2 then $\asf{x}{z}$ 
cannot be of type 2.
Now, there is only one marker $u$ such that  $\asf{u}{\circ}$ is an adjacency 
of $G$. Let $\aff{u}{v}$ be the adjacency of $G$ having $u$ as first marker, 
then at most half of the intervals in $\mathcal{I}(G) - \{I\aff{u}{v}\}$ can 
be of type 2.
Therefore, there are at most $n$ intervals of type 2 in $\mathcal{I}(G)$.   \qed







\end{proof}

\begin{theorem}
If $\NG(G)$ contains $C$ cycles, then the BI halving distance of $G$
    is given by: 
$$d^t_{BI}(G) =  \left \lfloor \frac{n - C}{2} \right \rfloor$$
\label{th:distance}
\end{theorem}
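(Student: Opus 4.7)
The plan is to prove the matching upper bound $d^t_{BI}(G) \le \lfloor (n-C)/2 \rfloor$ constructively; combined with the lower bound already established, this gives equality. I would proceed by strong induction on $n - C$, which coincides with $d^p_{DCJ}(G)$ by Property \ref{prop:EC} applied to the Mixtacki formula (all cycles of $\NG(G)$ are even and the unique path is even, so both the even-cycle count and the odd-path count reduce cleanly).

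For the inductive step, whenever $n > 3$, I would exhibit a single BI that decreases $d^p_{DCJ}$ by exactly $2$. By Property \ref{1BIto2DCJ}, such a BI amounts to performing $\DCJ\aff{a}{b}$ followed by $\DCJ\aff{x}{y}$ for two adjacencies of $G$ whose intervals $I\aff{a}{b}$ and $I\aff{x}{y}$ are \emph{compatible}. Their existence follows from the counting estimate supplied by Lemmas \ref{maxType1} and \ref{maxType2}: Remark \ref{maxEdges} gives $|\mathcal{I}(G)| = 2n-1$, while at most $2$ of its intervals are of type $1$ and at most $n$ of type $2$, leaving at least $n-3 \ge 1$ intervals that are of neither type. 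Choosing such an $I\aff{a}{b}$ and applying $\DCJ\aff{a}{b}$ produces a genome $A$ consisting of a linear chromosome $\mathfrak{L}$ and a circular chromosome $\mathfrak{C}$; since $I\aff{a}{b}$ is neither type $1$ nor type $2$, Property \ref{formuleENR} fails, so there is an adjacency $\aff{x}{y}$ of $A$ with $\snd{x}$ and $\snd{y}$ on opposite sides of $\mathfrak{L}$ and $\mathfrak{C}$. The corresponding sorting $\DCJ\aff{x}{y}$ reintegrates $\mathfrak{C}$ into $\mathfrak{L}$, and the combined operation is a BI on $G$ whose two intervals are compatible. The resulting genome $G'$ (after reduction) satisfies $n' - C' = n - C - 2$, so the induction hypothesis yields $d^t_{BI}(G) \le 1 + \lfloor (n-C-2)/2 \rfloor = \lfloor (n-C)/2 \rfloor$.

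The base cases split in two. When $n - C = 1$ the natural graph is forced to consist of $C$ cycles of length $2$ and a single path of length $2$; as $G$ is already reduced, this forces $G$ itself to be tandem-duplicated, giving $d^t_{BI}(G) = 0 = \lfloor 1/2 \rfloor$. When $n \in \{2,3\}$ and $n - C \in \{2,3\}$, we have $\lfloor (n-C)/2 \rfloor = 1$, and I would verify by direct case analysis that a single BI transforms $G$ into a tandem-duplicated genome (this BI need not decompose into two sorting DCJs, which is precisely why $n \leq 3$ is excluded from the counting argument above).

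The main obstacle will be justifying, in the inductive step, that the adjacency $\aff{x}{y}$ supplied by the failure of Property \ref{formuleENR} inside $A$ actually comes from an adjacency of the original $G$, so that $I\aff{x}{y} \in \mathcal{I}(G)$ and the whole operation is a genuine BI of $G$. This uses the fact that $\DCJ\aff{a}{b}$ modifies only the two adjacencies it cuts, so apart from these exceptions every adjacency of $A$ is already an adjacency of $G$; one then has to argue that the failure of both conditions (1) and (2) produces an $\aff{x}{y}$ outside this exceptional pair. Once this bookkeeping is in place, the counting bound of Lemmas \ref{maxType1} and \ref{maxType2} does the real work, and the theorem follows.
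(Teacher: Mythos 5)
Your proposal is correct and follows essentially the same route as the paper: the lower bound from Section~\ref{sec:lb} plus the counting argument ($2n-1$ intervals, at most $2$ of type~1 and $n$ of type~2, so for $n>3$ some interval is of neither type and Property~\ref{formuleENR} yields a compatible partner), with $n\in\{2,3\}$ handled by direct case analysis. Your explicit induction framing, the $n-C=1$ base case, and the bookkeeping point about $\aff{x}{y}$ being an adjacency of the original $G$ only make explicit what the paper's proof leaves implicit.
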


\begin{proof}



    Since there are $2n - 1$ intervals in $\mathcal{I}(G)$, and at
    most $n+2$ are of type $1$ or $2$, then if $G$ is a genome
    containing more than three distinct markers $n~>~3$, then
    $2n-1~>~n+2$ and there exist two compatible intervals in
    $\mathcal{I}(G)$ inducing a BI operation that decreases the DCJ
    distance by $2$.

Next, we show that if $n = 2$ or $n = 3$, then $d^t_{BI}(G)=1$ and $2
\leq d^p_{DCJ}(G) \leq 3$.
 
If $n = 2$, then the genome can be written, either as
$(\circ~\fst{a}~\fst{b}~\snd{b}~\snd{a}~ \circ)$, in which case a BI
can swap $\fst{a}$ and $\fst{b}$ to produce a tandem-duplicated
genome, or as $(\circ~\fst{a}~\snd{a}~\fst{b}~\snd{b}~\circ)$, in
which case a BI can swap $a$ and $\snd{a}~\fst{b}$ to produce a
tandem-duplicated genome.

If $n = 3$, then the genome has two double-adjacencies to be
constructed, of the form $\ass{a}{b}$, $\ass{x}{y}$, with $\aff{a}{b}$
and $\aff{x}{y}$ being two adjacencies already present in the genome
such that $\fst{b} = \fst{x}$ or $\fst{b} = \snd{x}$ and $a$ and $y$
are distinct markers. One can rewrite $\aff{a}{b}$ and $\aff{x}{y}$ as
single markers since they will not be splitted, which makes a genome
with 4 markers such that at most 2 are misplaced. Then, a single BI
can produce a tandem-duplicated genome.

Now, it is easy to see to see that if  $n = 2$ or $n = 3$, then $d^p_{DCJ}(G) = n- C \leq 3$. Finally, if  $n = 2$ or $n = 3$, then $d^p_{DCJ}(G) \geq 2$, otherwise we would have  $d^p_{DCJ}(G) = 1$ which would imply, as $G$ consists in a single linear chromosome, $d^t_{BI}(G) = 0$.
In conclusion, if $n~>~3$ then there exist two compatible intervals in  $\mathcal{I}(G)$, otherwise if  $n = 2$ or $n = 3$, then $d^t_{BI}(G)=1$ and  $2 \leq d^p_{DCJ}(G) \leq 3$. Therefore $d^t_{BI} = \lfloor \frac{d^p_{DCJ}}{2}\rfloor = \lfloor \frac{n-C}{2}\rfloor$. \qed

\end{proof}





\section{Sorting algorithm}
\label{sec:scenario}



In Section \ref{sec:dist}, we showed that if a genome $G$ contains more than 
three distinct markers after reduction then there exist two compatible 
intervals in $\mathcal{I}(G)$ inducing a BI to perform.
If $G$ contains two or three distinct markers then the BI to perform can be 
trivially computed.
Thus the main concern of this section is to describe an efficient algorithm 
for finding compatible intervals when $n~>~3$.


As in Section \ref{sec:dist}, in the following, $G$ denotes a genome consisting 
of $n$ distinct markers after reduction. 
%
%
%
%
It is easy to show that the set of intervals 
$\mathcal{I}(G)$ can be built in $O(n)$ time and space complexity.



We now show that finding 2 compatible intervals in  $\mathcal{I}(G)$ can be done in $O(n)$ time and space complexity.

\begin{property}
If $n~>~3$ 
, then all the smallest intervals in $\mathcal{I}(G)$ that are not of type 2 admit compatible intervals.
\label{smallestOK}
\end{property}

\begin{proof}











    Let $J$ be a smallest interval that is not of type 2 in
    $\mathcal{I}(G)$. As $J$ is not of type 2, then $J$ has compatible
    intervals if $J$ is not of type 1.

Let us suppose that $J$ is of type 1, then for any adjacency $(a ~ b)$ such 
that markers $a$ and $b$ are not in $J$,  $\snd{a}$ and $\snd{b}$ are in $J$, 
and then $I(a ~ b)$ is strictly included in $J$ and $I(a ~ b)$ can't be of 
type 2. Such adjacency does exist as there are $n~>~3$ markers not included in $J$.
Therefore $J$ cannot be a smallest interval that is not of type 2.
\qed
\end{proof}

We are now ready to give the algorithm for sorting a duplicated genome $G$ into a tandem-duplicated genome with $\lfloor \frac{n - C}{2} \rfloor$ BI operations.



\begin{algorithm}                      
\caption{Reconstruction of a tandem-duplicated genome}          
\label{alg1}                           
\begin{algorithmic}[1]                    

\WHILE{$G$ contains more than $3$ markers}
\STATE Construct $\mathcal{I}(G)$
\STATE Pick a smallest interval $I(a ~~ b)$ that is not of type 2 in $\mathcal{I}(G)$
\STATE Find an interval $I(x ~~ y)$ in $\mathcal{I}(G)$ compatible with $I(a ~~ b)$
\STATE Perform the \BI ~ equivalent to $\DCJ(a ~~ b)$ followed by $\DCJ(x ~~ y)$
\STATE Reduce $G$
\ENDWHILE
\IF{$G$ contains $2$ or $3$ markers}
\STATE Find the last BI operation and perform it
\ENDIF
\end{algorithmic}
\end{algorithm}



\begin{theorem}

Algorithm \ref{alg1} reconstruct a tandem-duplicated genome with a BI scenario of length $\lfloor \frac{n - C}{2} \rfloor$ in $O(n^2)$ time and space complexity.
\end{theorem}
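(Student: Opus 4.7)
The plan is to verify three things: that the algorithm always succeeds in each iteration, that the total number of \BI\ operations it produces matches the distance formula of Theorem~\ref{th:distance}, and that each iteration costs $O(n)$ work.

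First I would argue correctness of a single iteration when $n>3$. Since $|\mathcal{I}(G)|=2n-1$ by Remark~\ref{maxEdges} and at most $n$ intervals are of type~2 by Lemma~\ref{maxType2}, at least $n-1\ge 3$ intervals are not of type~2, so a smallest non-type-2 interval $I\aff{a}{b}$ exists (line~3). Property~\ref{smallestOK} then guarantees that $I\aff{a}{b}$ admits a compatible partner $I\aff{x}{y}\in\mathcal{I}(G)$ (line~4). By the characterization of compatibility preceding Lemma~\ref{maxType1}, the pair $\DCJ\aff{a}{b}$ followed by $\DCJ\aff{x}{y}$ is equivalent to a single \BI\ and decreases $d^p_{DCJ}$ by exactly~$2$. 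Reduction after the \BI\ preserves the natural graph structure of the remaining chromosome, so the invariant ``$G$ is a rearranged duplicated linear genome with $n-C=d^p_{DCJ}(G)$'' is maintained.

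Next I would count operations. Each iteration of the while loop decreases the quantity $n-C$ by~$2$ (because two sorting DCJs are applied). The loop terminates as soon as $G$ has at most $3$ markers after reduction. At that point the final \textsf{if} block performs at most one more \BI, following the case analysis in the proof of Theorem~\ref{th:distance} ($n=2$ or $n=3$ is resolved by a single \BI, and $n\le 3$ with $d^p_{DCJ}=0$ needs none). Summing, the total number of \BI\ operations performed equals $\lfloor (n-C)/2\rfloor$, the distance established in Theorem~\ref{th:distance}, so the produced scenario is both valid and optimal.

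For the complexity analysis I would observe that each iteration runs in $O(n)$ time and space: constructing $\mathcal{I}(G)$ as a list of intervals indexed by their endpoints takes $O(n)$; finding a smallest interval not of type~2 is a linear scan (testing the type~2 condition for each interval amounts to checking, for its two bounding adjacencies, whether the complementary markers lie inside or outside it, which is $O(1)$ with precomputed positions); searching for a compatible partner for a fixed interval is also $O(n)$; applying the \BI\ and reducing $G$ take $O(n)$. Since the loop executes at most $\lfloor(n-C)/2\rfloor=O(n)$ times, the overall cost is $O(n^2)$ time. Storing the output scenario of length $O(n)$, each operation described in $O(n)$ space, yields $O(n^2)$ total space.

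The step I expect to require the most care is the $O(n)$ bound on one iteration, specifically the claim that the type-2 test and the search for a compatible partner can be realized in linear time. The remaining ingredients are essentially bookkeeping given the structural results of Section~\ref{sec:dist}.
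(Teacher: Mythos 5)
Your proof is correct and follows essentially the same route as the paper's: the paper likewise argues that each while-loop iteration costs $O(n)$ (building $\mathcal{I}(G)$ and finding two compatible intervals), that the final case $2\leq n\leq 3$ is handled in constant time, and that the scenario length is $\lfloor (n-C)/2\rfloor$ because every \BI\ except possibly the last is a pair of sorting DCJs. You simply spell out the counting and existence arguments (via Remark~\ref{maxEdges}, Lemma~\ref{maxType2}, Property~\ref{smallestOK}, and Theorem~\ref{th:distance}) more explicitly than the paper does.
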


\begin{proof}
Building $\mathcal{I}(G)$ and finding two compatible intervals can be done in $O(n)$ time and space complexity. It follows that the while loop in the algorithm can be computed in $O(n^2)$ time and space complexity.



Finding and performing the last \BI~operation when  $2\leq n\leq 3$ can be done in constant time and space complexity.

Moreover, all \BI~ operations, possibly excluding the last one, are computed as 
pairs of sorting DCJ operations, which ensures that the length of the scenario 
is $\lfloor \frac{n - C}{2} \rfloor$. \qed
\end{proof}


\section{Conclusion}

In this paper, we introduced the BI halving problem. We use the DCJ model
to simulate BI operations and we showed that it is always possible to choose two consecutive 
sorting DCJ operations such that they are equivalent to a BI operation.
We thus provide a quadratic time and space algorithm to obtain a
most parsimonious scenario as any computed BI scenario is in fact an optimal DCJ scenario.
Finally, one direction for further studies of variants of the BI halving problem is to consider multichromosomal
genomes and BI operations acting on more than one chromosome.
\bibliographystyle{plain} 
\bibliography{article1}

\begin{thebibliography}{1}

\bibitem{Christie96}
David~A. Christie.
\newblock Sorting permutations by block-interchanges.
\newblock {\em Inf. Process. Lett.}, 60(4):165--169, 1996.

\bibitem{Mabrouk98}
Nadia El-Mabrouk, Joseph~H. Nadeau, and David Sankoff.
\newblock Genome halving.
\newblock In Martin Farach-Colton, editor, {\em Proceedings of CPM'98}, volume
  1448 of {\em Lecture Notes in Computer Science}, pages 235--250. Springer,
  1998.

\bibitem{Mabrouk03}
Nadia El-Mabrouk and David Sankoff.
\newblock The reconstruction of doubled genomes.
\newblock {\em SIAM J. Comput.}, 32(3):754--792, 2003.

\bibitem{Kovac10}
Jakub Kov{\'a}c, Mar\'{\i}lia D.~V. Braga, and Jens Stoye.
\newblock The problem of chromosome reincorporation in {DCJ} sorting and
  halving.
\newblock In Eric Tannier, editor, {\em RECOMB-CG}, volume 6398 of {\em Lecture
  Notes in Computer Science}, pages 13--24. Springer, 2010.

\bibitem{Lin05}
Ying~Chih Lin, Chin~Lung Lu, Hwan-You Chang, and Chuan~Yi Tang.
\newblock An efficient algorithm for sorting by block-interchanges and its
  application to the evolution of vibrio species.
\newblock {\em Journal of Computational Biology}, 12(1):102--112, 2005.

\bibitem{Mixtacki08}
Julia Mixtacki.
\newblock Genome halving under {DCJ} revisited.
\newblock In Xiaodong Hu and Jie Wang, editors, {\em Proceedings of COCOON'08},
  volume 5092 of {\em Lecture Notes in Computer Science}, pages 276--286.
  Springer, 2008.

\bibitem{Tannier08}
Eric Tannier, Chunfang Zheng, and David Sankoff.
\newblock Multichromosomal genome median and halving problems.
\newblock In Keith~A. Crandall and Jens Lagergren, editors, {\em Proceedings of
  WABI'08}, volume 5251 of {\em Lecture Notes in Computer Science}, pages
  1--13. Springer, 2008.

\bibitem{Warren08}
Robert Warren and David Sankoff.
\newblock Genome halving with double cut and join.
\newblock In Alvis Brazma, Satoru Miyano, and Tatsuya Akutsu, editors, {\em
  Proceedings of APBC'08}, volume~6 of {\em Advances in Bioinformatics and
  Computational Biology}, pages 231--240. Imperial College Press, 2008.

\bibitem{Yancopoulos05}
Sophia Yancopoulos, Oliver Attie, and Richard Friedberg.
\newblock Efficient sorting of genomic permutations by translocation, inversion
  and block interchange.
\newblock {\em Bioinformatics}, 21(16):3340--3346, 2005.

\end{thebibliography}

\end{document}